\numberwithin{equation}{section}
\newtheorem{rem}{Remark}[section]
\newtheorem{thm}{Proposition}[section]
\newtheorem{proof}{Proof}[section]
\title{Penalized EM algorithm and copula skeptic graphical models for inferring networks for mixed variables}
\author{Fentaw Abegaz and Ernst Wit\\
 Johann Bernoulli Institute of Mathematics and Computer Science \\
 University of Groningen\\%
 }
\date{}
\begin{document}
\maketitle
\begin{abstract}

In this article, we consider the problem of reconstructing networks for continuous, binary, count and discrete ordinal variables by estimating sparse precision matrix in Gaussian copula graphical models.  We propose two approaches: $\ell_1$ penalized extended rank likelihood with Monte Carlo Expectation-Maximization algorithm (copula EM glasso) and copula skeptic with pair-wise copula estimation for copula Gaussian graphical models. The proposed approaches help to infer networks arising from nonnormal and mixed variables. We demonstrate the performance of our methods through simulation studies and analysis of breast cancer genomic and clinical data and maize genetics data. 

\end{abstract}
{\bf Keywords:} Gaussian copula; $\ell_l$ penalized maximum likelihood; Gaussian graphical models; EM algorithm; Extended rank likelihood; Nonparanormal skeptic; Copula skeptic.

\section{Introduction}
\label{sec:introduction}

The aim of this article is to formulate an inference approach for the analysis of high dimensional data  that involves mixed variables of continuous, binary and ordered categorical types using graphical models. In particular, we focus on model estimation and identification of undirected graph structure for Gaussian graphical models for high dimensional datasets. We base our inference procedure on the EM algorithm and pair-wise copula estimation with $\ell_1$ penalized extended rank likelihood. 

Toward the study of mixed variables determination of their joint distribution is the main challenge. The seminal work of \cite{sklar1959fonctions} that formally introduced the notion of copula provide the theoretical framework, in which a joint probability distribution can be represented by its univariate marginal distributions and a copula function. As a result multivariate association, which is fully described by the copula function, can be modeled separately from the univariate marginal distributions.

In copula modeling, \citet*{genest1995semiparametric} developed a popular semiparametric estimation or ``rank likelihood'' based estimation, in which the association among the variables are represented by a parametric copula model but the marginals are treated as nuisance parameters and estimated nonparametrically. The resulting semiparametric estimators are well-behaved for continuous data but fail for discrete data, for which the distribution of the ranks depends on the univariate marginal distributions, making them somewhat inappropriate for the analysis of mixed continuous and discrete data \citep{hoff2007extending}. To remedy this, \citet{hoff2007extending} propose the extended rank likelihood, which is a type of marginal likelihood that does not depend on the marginal distributions of the observed variables. Under the extended rank likelihood approach, the ranks are free of the nuisance parameters (or marginal distributions) of the discrete data. This makes the extended rank likelihood approach more suited for the determination of graphical models in the mixed variable setting and avoids the difficult problem of modeling marginal distributions \citep{dobra2011copula}.

The extended rank likelihood estimation is implemented for the study of association among mixed variables under a Bayesian framework by \citet{hoff2007extending} and further studied in the graphical model setting by \citet{dobra2011copula} using Bayesian model averaging approach for graph identification and estimation in copula Gaussian graphical models. Since the marginals are treated as nuisance parameters, the parameter of interest for estimation is the correlation matrix or the precision matrix, i.e. the inverse of the correlation matrix in case of a Gaussian copula. \citet*{ambroise2009inferring} raised their concern on the challenging task involved in the Bayesian framework to construct priors on the set of precision or concentration matrices. In this article we propose an alternative approach that consider the extended rank likelihood under $l_1$ penalized maximum likelihood setting with the Expectation-Maximization (EM) algorithm for high-dimensional inference based on graphical models. This approach is referred to as copula EM glasso. 

On the other hand, \citet{liu2012high} considered graphical modeling for binary and continuous variables using nonparanormal distributions and glasso algorithm of \citet*{friedman2008sparse}. In particular, in their nonparanormal skeptic approach, they  considered the rescaled empirical distribution transformation of data (with or without truncation and monotone transformation) to compute correlation matrix based on nonparametrically estimated pairwise rank correlations. We observe that the use of one step glasso algorithm makes their approach computationally efficient for high dimensional setting. Further we note that rank correlations such as Kendall's tau and Spearman's rho are directly related to bivariate copula models. Through these relationships and upon carefully selected bivariate copulas more accurate estimation of the rank correlations can be achieved. Thus, we extend the paranonnormal skeptic approach for rank correlations computed from bivariate parametric copulas. This approach is referred to as copula skeptic glasso.

We apply the proposed approaches to breast cancer genomic and clinical data. Breast cancer is the leading cause of death among women in the world and represents a significant health problem. Multiple factors like age, diet, obesity, parity, age at first childbirth, oral contraceptives, exogenous estrogens, genetics, environment, geographic location influence the development of breast cancer. However, the majority of the cases in breast cancer is always due to genetic abnormalities. At present, only small numbers of accurate prognostic and predictive factors are used clinically for managing the patients with breast cancer \citep{kumar2012application}. In the last few decades knowledge of breast cancer grade determined by Nottingham prognostic index (NPI) has been very helpful to decide on the most effective treatments. Moreover, microarray-based gene expression profiling has been used extensively to characterize the transcriptome of breast cancer, resulting in the identification of new molecular subtypes and markers or signatures of potential therapeutic and prognostic importance \citep{ringner2011gobo}. Inclusion of such treatment predictive markers considerably improved breast cancer treatment decisions.  To further tailor treatment for individual patients, identification of additional clinical and genetic markers is required. 

Genomic DNA copy number alterations, i,e., amplifications or deletions, are key genetic events in the development and progression of breast cancers. Gene copy number changes can be determined on a gene-by-gene basis using microarrays. A genome-wide microarray comparative genomic hybridization (CGH) is used to analyse the pattern of DNA copy number alteration with the aim to study the relationship between DNA amplification and deletion patterns and severity of breast cancer as measured by several clinical indicators on patients. Details of the experiment is discussed in \citet{jensen2009frequent}. The data from the breast cancer experiment include 296 variables of which 287 are genes and 9 are clinical variables obtained from 106 breast cancer patients. The genomic and clinical variables are mixed measurements of continuous, binary and ordered categorical types, see details in Section \ref{sec:analysisofdata}.

We also considered a second application of our approach on a very high dimensional setting on data from maize genetic nested association mapping population that has been analysed and discussed by \citet{mcmullen2009genetic}. The data set includes 1106 SNP loci or genetic markers and 4699 replicates.  Our objective is to obtain a sparse representation of potential trans-acting genetic markers that may provide information for a better understanding of the molecular basis of phenotypic variation. This helps to improve agricultural efficiency and sustainability.

The rest of this article is organized as follows. Section \ref{sec:graphicalmodels} provides a brief description of Gaussian copula modeling aspects related to continuous, binary, count and ordinal variables. Section \ref{sec:l1penalizedemestimation} formulates the copula based EM algorithm with $l_1$ penalized likelihood estimation and the copula skeptic glasso with pair-wise copula selection. It also discusses the selection of tuning parameter in case of EM formulation of glasso. Section \ref{sec:analysisofdata} demonstrates performance of the proposed approach using simulation studies and the analysis of high dimensional data on breast cancer and maize genetic properties. We close with a concluding remark in Section \ref{sec:conclusion}. 

\setcounter{equation}{0} \setcounter{figure}{0}
\section{Copula Graphical models}
\subsection{Gaussian copula graphical models}
\label{sec:graphicalmodels}

Graphical models are efficient tools for studying multivariate distributions through a compact, graphical representation of the joint probability distribution of the underlying random variables. The treatment of graphical models simplifies significantly, when one focuses on normally distributed variables. Let the random vector $Y=(Y_{1}, \ldots, Y_{p})^T$  be assumed to be Gaussian with a positive-definite covariance matrix $\Sigma$ of dimension $p \times p$. A graphical model $G = (V, E)$, where $V$ corresponds to the set of nodes with $p$ elements and $E \subset V\times V$ of ordered pairs of distinct nodes called the edges of $G$, for $\mathcal{N}_{p}(0, \Sigma)$ is called a Gaussian graphical model, if on the graph $G$, the edges $E$ represent conditional dependence among the random variables. Absence of an edge between any pair of random variables or zero value of a precision matrix, $\Theta = \Sigma^{-1}$, corresponds to conditional independence of the two random variables given the remaining ones.

In practice, we encounter both discrete and continuous variables that may not be Gaussian. Thus, the assumption of multivariate normal distribution would be too restrictive. To relax the normality requirement, we use the copula framework to construct multivariate distributions for arbitrary marginals as discussed in Section \ref{sec:introduction} above. In order to use the properties of the Gaussian graphical model, we consider the Gaussian copula. The Gaussian copula with correlation matrix $\Gamma$ of dimension $p \times p$ having $p(p-1)/2$ free parameters is given by:
 \begin{equation*}
    C(u_1, \ldots, u_{p} \mid \Gamma) = \Phi_{p}(\Phi^{-1}(u_1), \ldots, \Phi^{-1}(u_{p}) \mid \Gamma),
   \end{equation*}
and the corresponding Gaussisn copula-based distribution function is
 \begin{eqnarray}
    H(y_1, \ldots, y_p | \Gamma, F_1, \ldots, F_{p}) = \Phi_{p}(\Phi^{-1}(F_1(y_1)), \ldots, \Phi^{-1}(F_{p}(y_{p}))  \mid \Gamma).  \label{gc}
    \label{eq:gaussiancopulaGM}
\end{eqnarray}
Here $\Phi(\cdot)$ represents the CDF of the standard normal distribution and $\Phi_{p}(\cdot \mid \Gamma)$ is the CDF of
multivariate normal distribution $\mathcal{N}_P(0, \Gamma)$. 

We note that under the Gaussian copula the correlation matrix $\Gamma$ is the matrix of correlation coefficients among the transformed variables
$Z_j = \Phi^{-1}(F_{j}(y_{j}))$,  $j=1, \ldots, p$, which represent the maximum pairwise correlations among the $Y_{j}$s, $j=1, \ldots, p$. If the univariate marginal distributions are normal, then entries of the correlation matrix represent exactly the pairwise correlation coefficients of the variables.

\begin{thm}
 Let $\Gamma$ be a positive-definite matrix, such that $Z \sim \mathcal{N_P}(0,\Gamma)$ is a graphical model with respect to graph $G=(V,E)$. Then the continuous variable $Y$, defined via \ref{eq:gaussiancopulaGM}, is also a graphical model with respect to $G$. In particular, the precision matrix $\Theta = \Gamma^{-1}$  represents the conditional independence among the observed variables $Y_{j}$s.
\end{thm}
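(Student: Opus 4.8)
The plan is to exploit that $Y$ arises from the Gaussian vector $Z$ by a \emph{componentwise} monotone transformation, to combine this with the invariance of conditional independence under componentwise bimeasurable maps, and to invoke the standard link between the zero pattern of a Gaussian precision matrix and conditional independence.

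First I would make the transformation explicit. Since $Y$ is continuous, each marginal $F_j$ is continuous, and on the support of $Y_j$ it is strictly increasing, so $g_j(z) := F_j^{-1}(\Phi(z))$ defines a strictly increasing bijection of $\mathbb{R}$ onto that support, with strictly increasing --- hence Borel measurable --- inverse $y \mapsto \Phi^{-1}(F_j(y))$. By the definition of the Gaussian copula distribution in \ref{eq:gaussiancopulaGM}, the vector $(Y_1,\ldots,Y_p)$ has the same distribution as $(g_1(Z_1),\ldots,g_p(Z_p))$ with $Z \sim \mathcal{N}_p(0,\Gamma)$. (If some $F_j$ is continuous but not strictly increasing, one replaces $F_j^{-1}$ by the generalized inverse; the flat stretches of $F_j$ carry no probability mass, so the conclusion is unchanged. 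Otherwise I would simply assume strict monotonicity.)

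The key step is then measure-theoretic. Because each $g_j$ is a bijection with Borel measurable inverse, $\sigma(Y_j) = \sigma(g_j(Z_j)) = \sigma(Z_j)$, and more generally $\sigma(Y_A) = \sigma(Z_A)$ for every $A \subseteq V$. Conditional independence of collections of variables depends only on the $\sigma$-algebras they generate, so for pairwise disjoint $A, B, C \subseteq V$ one has $Z_A \perp Z_B \mid Z_C$ if and only if $Y_A \perp Y_B \mid Y_C$. Applying this with $A = \{i\}$, $B = \{j\}$, $C = V \setminus \{i,j\}$ for each non-edge $(i,j) \notin E$: since $Z$ is a Gaussian graphical model relative to $G$ we have $Z_i \perp Z_j \mid Z_{V \setminus \{i,j\}}$, hence $Y_i \perp Y_j \mid Y_{V \setminus \{i,j\}}$; that is, $Y$ satisfies the pairwise Markov property relative to $G$, which is exactly what it means for $Y$ to be a graphical model with respect to $G$ in the sense defined above. (Should one want the global Markov property, it transfers as well: $Z$ has a strictly positive continuous density, so its pairwise, local and global Markov properties coincide, and the $\sigma$-algebra identification carries the equivalence over to $Y$.)

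For the final assertion I would recall the classical characterization for $Z \sim \mathcal{N}_p(0,\Gamma)$: $\Theta_{ij} = (\Gamma^{-1})_{ij} = 0$ if and only if $Z_i \perp Z_j \mid Z_{V \setminus \{i,j\}}$, i.e. if and only if $(i,j) \notin E$. Together with the previous paragraph this shows that the nonzero off-diagonal entries of $\Theta = \Gamma^{-1}$ are precisely the edges of $G$, and that each vanishing entry $\Theta_{ij} = 0$ encodes the conditional independence $Y_i \perp Y_j \mid Y_{V \setminus \{i,j\}}$ of the observed variables. I do not anticipate a real obstacle here; the only points requiring care are that the transformation is applied coordinatewise (a general bijection would not preserve $G$) and that the marginal maps $g_j$ are regular enough (continuity and strict monotonicity, or generalized inverses) for the identity $\sigma(Y_A) = \sigma(Z_A)$ to hold. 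The rest is bookkeeping with standard Markov-property facts for Gaussians.
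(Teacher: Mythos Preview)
Your argument is correct. The paper's own proof is a single sentence that defers entirely to an external result: it invokes the invariance of conditional independence under equivalent probability measures, citing Theorem~3.6 of \citet{van1985invariance}, without spelling out the componentwise transformation or the $\sigma$-algebra identification.

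Your route is therefore different in style and, arguably, in substance. You work directly with the componentwise bijections $g_j = F_j^{-1}\circ\Phi$, observe that $\sigma(Y_A)=\sigma(Z_A)$ for every $A\subseteq V$, and conclude that conditional independence statements transfer verbatim from $Z$ to $Y$. This is the standard elementary argument and is fully self-contained; it also makes explicit where continuity of the marginals is used (so that each $g_j$ is a bimeasurable bijection, possibly up to null sets when $F_j$ has flat stretches). The paper's appeal to invariance under equivalent measures is terser but less transparent: it is not immediately obvious how the ``equivalent probability measures'' formulation maps onto the present situation, whereas your $\sigma$-algebra argument requires no such translation. Both approaches yield the same conclusion; yours simply unpacks the mechanism that the paper leaves implicit in the cited theorem.
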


\begin{proof} This is as a result of the invariance property of conditional independence relation over equivalent probability measures as shown in \citet[Theorem 3.6]{van1985invariance}.
\end{proof}

We now focus on graphical modeling for observed variables $Y$ of mixed type, i.e. in the case that they represent a collection of continuous, binary, ordinal or count variables. Suppose the $j$-th variable $Y_{j}$ has marginal distribution $F_{j}$ with its pseudo-inverse $F_{j}^{-1}$. In copula modeling, the marginals are treated as nuisance parameters and estimated nonparametrically mainly using the rescaled empirical distribution:  $\hat{F}_{j}(y) = \frac{1}{n+1}\sum_{i=1}^n \mathcal{I}\{Y_{ji} \leq y\}$, $j=1, \ldots, p$. A copula graphical model, discussed above, can be constructed by introducing a vector of latent variables $Z \sim \mathcal{N_p}(0,\Gamma)$ that are related to the observed variables $Y$ as $Y_{j} = \hat{F}_{j}^{-1}(\Phi(Z_j))$, $j=1, \ldots, p$. In the case of mixed variables, the graphical structure, i.e. the conditional independence implied by the graph structure, is assumed to hold exclusively on the latent variable $Z$.

The aim of the inference procedure is to infer the graphical structure $G$, defined by the latent variable $Z$. Though the $Z$s are not observable, the observed $Y_{j}$s do provide a limited amount of information about them. Since the $\hat{F}_{j}$s are nondecreasing, observing $Y_{k_1} < Y_{k_2}$ implies that $Z_{k_1} < Z_{k_2}$. More generally, observing the n-dimensional vector $Y_i$ tells us that $Z_i$ lies in
\begin{equation}
 D(Y_i) = \{z \in \Re^{n} | a_{i}(Y_{ij}) < Z_j \leq b_{i}(Y_{ij})) \}, \label{hs}
\end{equation}
where $a_{i}(y) = \inf \{z| \hat{F}_j^{-1}(\Phi(z)) = y\}$ and $b_{i}(y) = \sup \{z| \hat{F}_j^{-1}(\Phi(z)) =  y \}$. In fact, for every ordinal $Y_j$, we can identify a set of thresholds $\tau_j=\left(\tau_{j0}, \tau_{j1}, \ldots, \tau_{j{n_j}}\right)$ with
\[ -\infty=\tau_{j0} < \tau_{j1} < \cdots < \tau_{j{n_j}} = \infty, \]
such that for some ordered set of values $\left\{ c_{j1} < \cdots < c_{j{w_j}} \right\}$,
	     \begin{eqnarray*}
	         Y_{j} = \sum^{n_j}_{r=1} c_{jr} \times I\left\{{\tau_{j,{r-1}} < z_{j} \leq \tau_{jr}}\right\}.
	     \end{eqnarray*}
It follows that the mapping of the ordered values of $Y_j$ into some defined intervals $(\tau_{jr},\tau_{j,r+1}]$ of the latent variable $Z_j$ relies on the following
relationship
\[  \tau_{jr} = \Phi^{-1}(\hat{F}_{j}(c_{jr}))),~~ r=1,\ldots,n_j-1.\]
The collection of these intervals is the set $D(Y)=D(Y_1, \ldots, Y_n)$ in \eqref{hs}. In case of missing observations, we consider data missing in this study as missing completely at random so that the missing values are easily determined from the latent variable distribution defined on the interval $(-\infty, \infty)$. Then the occurrence of event $Z \in D(Y)$ is taken as the observed data to infer about the copula parameter or the graph structure separately from the marginal distributions. Such inference approach is similar to the extended rank likelihood in \citet{hoff2007extending} and copula Gaussian graphical modeling in \citet{dobra2011copula}.

\setcounter{equation}{0} \setcounter{figure}{0}

\section{Sparse inference methods}
\subsection{ Copula EM GLASSO approach}
\label{sec:l1penalizedemestimation}
The Expectation-Maximization (EM) algorithm \citep*{dempster1977maximum} is a popular method for maximum likelihood estimation in the case of incomplete data, which naturally occur in our setting as a result of the latent nature of $Z$ as discussed in the previous section. In this section we consider EM algorithm with penalized likelihood. \citet{green1990use} studied convergence properties of the EM algorithm for penalized likelihood. 

The marginal likelihood of $Y$ where we consider $F_{1}, \ldots, F_{p}$ as nuisance parameters; see also \citet{hoff2007extending}, is
\begin{eqnarray}
 L_Y(\Theta) =  \int_{D(Y)} p(z  \mid \Theta ) d z \label{likH}
\end{eqnarray}

For large sample sizes the precision matrix $\Theta$ can be estimated by maximizing the log-likelihood $l(\Theta)$ as a function of $\Theta$. Whereas for high-dimensional data $(n << p)$, we add an $l_1$-norm penalty to encourage sparsity in the precision matrix and the identification of the underlying graph. The $l_1$ penalized log-likelihood is given by
\begin{eqnarray}
   \ell_\lambda(\Theta) =  \log L_Y(\Theta) - \lambda \left\|\Theta\right\|_1  , \label{penL}
\end{eqnarray}
where the scalar parameter $\lambda \geq 0$ controls the size of the penalty.

Due to the complexity of maximizing the marginal log-likelihood $l_Y(\Theta)$ in \eqref{likH} and \eqref{penL}, we employ the EM algorithm. We have discussed in the previous section that the observed data $Y$ provide some information on the latent variables $Z$ such that the occurence of the event $Z \in D(Y)$ is taken as the observed data to infer on $\Theta$. We now recall from standard EM algorithm setting that the loglikelihood of the observed data can be expressed as 
\begin{equation}
   \ell(\Theta) = Q(\Theta \mid \Theta^{(m)}) - H(\Theta \mid \Theta^{(m)}), \label{obsloglik}
\end{equation}   
where $\Theta^{(m)}$ an estimate of $\Theta$ from the previous step of the algorithm,
\begin{equation}
   Q(\Theta \mid \Theta^{(m)}) =E\left[\log L_{Z, Z \in D(Y)}(\Theta) \mid Z \in D(Y), \Theta^{(m)} \right] \label{hoffeqn},
\end{equation}
and
\begin{equation}
   H(\Theta \mid \Theta^{(m)}) = E\left[\log L_{Z \mid Z \in D(Y)}(\Theta) \mid Z \in D(Y), \Theta^{(m)} \right].
\end{equation}
The penalized log-likelihood takes the form
\begin{eqnarray}
   \ell_\lambda(\Theta) =  Q(\Theta \mid \Theta^{(m)}) - H(\Theta \mid \Theta^{(m)}) - \lambda \left\|\Theta\right\|_1 , \label{penLQH}
\end{eqnarray}
such that
\begin{eqnarray}
   \Theta_\lambda^{({m+1})} = \textrm{argmax}_{\Theta} \left\{ Q(\Theta \mid \Theta^{(m)}) - H(\Theta \mid \Theta^{(m)}) - \lambda \left\|\Theta\right\|_1 \right\}. \label{maxp1}
\end{eqnarray}
Further, from standard EM approach,  $H(\Theta \mid \Theta^{(m)}) \leq H(\Theta^{(m)} \mid \Theta^{(m)}) $
for any $\Theta$ in the parameter space. Thus, obtaining an updated estimate of the parameter by maximizing the $l_1$ penalized log-likelihood in \eqref{maxp1} reduces to
\begin{eqnarray}
   \Theta_\lambda^{({m+1})} = \textrm{argmax}_{\Theta} \left\{ Q(\Theta \mid \Theta^{(m)}) - \lambda \left\|\Theta\right\|_1 \right\}. \label{maxp2}
\end{eqnarray}

The EM optimization strategy alternates iteratively between the E-step, computing conditional expectation of the complete log-likelihood $Q(\Theta \mid \Theta^{(m)})$ and the M-step, maximizing $Q(\Theta \mid \Theta^{(m)})$, with a sparsity penalty $\lambda \left\|\Theta\right\|_1$, over $\Theta$. 

{\it{E-step}}: The complete data likelihood depends on the joint distribution of $(Z,Z \in D(Y))$ given by
\begin{eqnarray*}
p(Z,Z \in D(Y) \mid \Theta) =  \left\{
\begin{array}[pos] {ll}
	p(Z \mid \Theta) ~~~~~~~~~~~~~~~~~~~~ Z \in D(Y) \\
	0 ~~~~~~~~~~~~~~~~~~~~~~~~~~~~~ Z \notin D(Y) 
\end{array} 
\right.
\end{eqnarray*}
where $p(Z \mid \Theta)$ is the multivariate normal density with mean zero and variance $\Sigma = \Theta^{-1}$.
Then the complete log likelihood of $(Z,Z \in D(y))$ for a random sample of size $n$ after ignoring constants with respect to $\Theta$ is given by
\begin{eqnarray}
l_{Z}(\Theta) &=&  \sum_{i=1}^n \log \left(p(Z_i \mid \Theta)\right) I_{\{Z_i \in D(Y_i)\}}  \nonumber\\
&=& -\frac{np}{2} \log(2 \pi) + \frac{n}{2} \log \det(\Theta) - \frac{1}{2} \sum^n_{i=1} Z_i^T \Theta Z_i I_{\{Z_i \in D(Y_i)\}} \label{complik}
\end{eqnarray}
Using the complete log likelihood in \eqref{complik}, it follows that
 \begin{eqnarray}
    Q(\Theta \mid \Theta^{(m)})  &=& E\left[l_{Z}(\Theta) \mid Z \in D(Y), \Theta^{(m)} \right] \nonumber \\
     &=&  \frac{n}{2}\left\{ \log \det(\Theta) - \frac{1}{n} \sum_{i=1}^n \left(  E\left[ Z_i^T \Theta Z_i \mid  Z_i \in D(Y_i), \Theta^{(m)}\right]\right) \right\} \nonumber \\
     &=&  \frac{n}{2}\left\{ \log \det(\Theta) -  \mbox{Tr}\left( \Theta \frac{1}{n} \sum_{i=1}^n E\left[ Z_i Z_i^T \mid  Z_i \in D(Y_i), \Theta^{(m)}\right]\right) \right\} \nonumber \\
   &=& \frac{n}{2} \left\{ \log \det(\Theta) - \mbox{Tr}\left(\Theta \bar{R}\right) \right\},  \label{QEM1}
  \end{eqnarray}
where $\mbox{Tr}$ stands for the trace of a matrix and $\bar{R}$ is the expected empirical covariance function of the latent variables given $Z \in D(Y)$:
 \[ \bar{R} = \frac{1}{n} \sum_{i=1}^n E\left[Z_i Z_i^T \mid z_i \in D(Y_i), \Theta^{(m)}\right], \]
where
 \begin{eqnarray}
 E\left[Z_i Z_i^T \mid Z_i \in D(Y_i), \Theta^{(m)}\right] &=& E\left[Z_i  \mid Z_i \in D(y_i), \Theta^{(m)}\right] E\left[Z_i  \mid Z_i \in D(y_i), \Theta^{(m)}\right]^T  \nonumber \\
     & &  + ~~ cov \left[Z_i \mid Z_i \in D(y_i), \Theta^{(m)} \right]  \label{condmv}
 \end{eqnarray}
Note that the conditional latent random variable $\left\{Z \mid Z \in D(Y)\right\}$ follows a truncated multivariate normal distribution. Analytical expressions for the computations of moments for truncated multivariate normal distribution are given in  \citet{wilhelm2010tmvtnorm} and references therein. However, due to the computational complexity, obtaining analytical solutions is only feasible for very few variables. Another approach is to simulate a large sample from the truncated multivariate normal distribution and calculate the sample conditional covariance matrix and sample conditional mean to estimate $E\left[Z_i Z_i^T \mid Z_i \in D(y_i), \Theta^{(m)}\right]$ using \eqref{condmv}. 

Alternatively, towards a computational efficient approach instead of mapping all mixed variables to a latent space as discussed above we may partition the mixed variables into two as continuous denoted by $Y_c$, and ordered variables that includes ordinal, binary and counts denoted by $Y_o$. We also partition the correlation matrix along with the variables grouping as
\begin{eqnarray*}
\Sigma = 
\begin{bmatrix}
\Sigma_{cc} & \Sigma_{co} \\
\Sigma_{oc} & \Sigma_{oo}
  \end{bmatrix}.
\end{eqnarray*}

We then take $Z = (Z_c, Z_o) \sim N\left(0, \Theta \right)$, where $\Theta = \Sigma^{-1}$ with $Z_c = \Phi^{-1}(\hat{F}(Y_c))$ is   transformed normal scores of observed continuous variables using the rescaled empirical distribution based on the natural Gaussian copula semiparametric approach and $Z_o \in D(Y_o)$ is the latent normal score corresponding to the ordered observed variables $Y_o$ obtained in a similar way as discussed in Section \ref{sec:graphicalmodels}
.

With a similar argument as above the complete data likelihood depends on the joint distribution: $p(Z_c,Z_o, Z_o\in D(Y_{o}) \mid \Theta) = p(Z_c,Z_o)$, for $Z_o \in D(Y_o)$. Such that the complete data loglikelihood after ignoring constants is given by
\begin{eqnarray}
l_{Z_c,Z_o}(\Theta) &=&  \sum_{i=1}^n \log \left(p(Z_{c_i}, Z_{0_i} \mid \Theta)\right) I_{\{Z_{o} \in D(Y_{o})\}}  \nonumber\\
&=&  \frac{n}{2} \log \det(\Theta) - \frac{1}{2} \sum^n_{i=1} \left[Z_{c_i},Z_{o_i} \right] ^T \Theta \left[Z_{c_i},Z_{o_i} \right] I_{\{Z_{o_i} \in D(Y_{o_i})\}} \label{complik2}
\end{eqnarray}
Using this complete data log-likelihood and after ignoring constants with respect to $\Theta$, it follows that
 \begin{eqnarray}
    Q(\Theta \mid \Theta^{(m)})  &=& E_{Z_o}\left[l_{Z_c, Z_o }(\Theta) \mid Y_c, Z_o \in D(Y_o), \Theta^{(m)} \right] \nonumber \\
     &=&  \frac{n}{2}\left\{ \log \det(\Theta) \right.\\
		& & ~~ \left. -  \mbox{Tr}\left( \Theta \frac{1}{n} \sum_{i=1}^n E_{Z_o}\left[ \left[Z_{c_i},Z_{o_i} \right] \left[Z_{c_i},Z_{o_i} \right]^T \mid  Y_{c_i}, Z_{o_i} \in D(Y_{o_i}), \Theta^{(m)}\right]\right) \right\} \nonumber \\
   &=& \frac{n}{2} \left\{ \log \det(\Theta) - \mbox{Tr}\left(\Theta \tilde{R}\right) \right\},  \label{QEM2}
  \end{eqnarray}
where 
\begin{eqnarray}
  \tilde{R} = \frac{1}{n} \sum_{i=1}^n E_{Z_o}\left[ \left[Z_{c_i},Z_{o_i} \right] \left[Z_{c_i},Z_{o_i} \right]^T \mid  Y_{c_i}, Z_{o_i} \in D(Y_{o_i}), \Theta^{(m)}\right].  
\end{eqnarray}
 An estimate of $\tilde{R}$ can be obtained after evaluating the expectations as follows. 
\begin{eqnarray*}
& E_{Z_o}\left[ Z_{c_i} Z_{c_i}^T \mid  Y_{c_i}, Z_{o_i} \in D(Y_{o_i}), \Theta^{(m)}\right] = Z_{c_i} Z_{c_i}^T \\
& E_{Z_o}\left[ Z_{c_i} Z_{o_i}^T \mid Y_{c_i}, Z_{o_i} \in D(Y_{o_i}), \Theta^{(m)}\right] = Z_{c_i} \hat{Z}_{o_i}^T \\
& E_{Z_o}\left[ Z_{o_i} Z_{o_i} ^T \mid  Y_{c_i}, Z_{o_i} \in D(Y_{o_i}), \Theta^{(m)}\right] = \hat{Z}_{o_i} \hat{Z}_{o_i}^T + \Theta^{{(m)}^{-1}}_{oo}. 
\end{eqnarray*}
where  $\hat{Z}_{o_i}^T= E_{Z_o}\left[ Z_{o_i} \mid  Z_{o_i} \in D(Y_{o_i}), \Theta^{(m)}\right] - \Theta^{{(m)}^{-1}}_{oo} \Theta^{(m)}_{oc} Z_{c_i}$,  is a conditional expectation defined on the distribution $p(Z_{o} \mid  Y_{c})$ for $Z_{o} \in D(Y_{o})$.

{\it{M-step}}: This involves updating the parameter $\Theta$ using the $l_1$ penalized log-likelihood given by
\begin{eqnarray}
   \Theta_\lambda^{({m+1})} = \textrm{argmax}_{\Theta} \left\{ Q(\Theta \mid \Theta^{(m)}) - \lambda \left\|\Theta\right\|_1 \right\}. 
\end{eqnarray}
We next substitute the $Q$ function by \eqref{QEM1} or \eqref{QEM2} from the E-step to obtain
\begin{eqnarray}
   \Theta_\lambda^{({m+1})} = \textrm{argmax}_{\Theta} \left\{ \log \det(\Theta) - \mbox{Tr}( \Theta \bar{R}) - \lambda \left\|\Theta\right\|_1 \right\}. \label{maxl1pen}
\end{eqnarray}
The maximization problem in \eqref{maxl1pen} takes the form of $l_1$ penalized likelihood for Gaussian graphical models and computation is done efficiently using the graphic lasso algorithm \citep{friedman2008sparse}. This algorithm is fast and allows the re-use of the estimate under one value of the tuning parameter as a ``warm"' start for the next value. The determination of a value for $\lambda$ in case of penalized inference with EM algorithm is discussed in Section \ref{sec:modelselection}.

\begin{rem}
Setting the penalty parameter $\lambda = 0$ results in the unpenalized maximum likelihood estimate which can be considered as an alternative to the Bayesian approach discussed in \citep{hoff2007extending}.
\end{rem}

\subsection{Copula skeptic GLASSO}

The copula EM glasso approach discussed in the previous section, though it is a natural approach, it is computationally expensive, since it calls MCMC in the E-step and glasso in the M-step. In particular, in a very high dimensional setting the computational issue requires further attention. One approach is to seek a one-to-one mapping of the observed and latent variables that avoids the Monte Carlo EM algorithm resulting from a one-to-many mapping. The semiparametric copula approach of estimating marginals through the rescaled empirical distribution is a one-to-one mapping or transformation of the observed data. Instead of directly using the transformed data to estimate $\Theta$, a sample correlation matrix can be compute from pairwise rank correlations. In this regard, \citet{liu2012high}  considered a nonparanormal skeptic approach to obtain sparse estimates of $\Theta$ for binary and continuous variables using one step glasso based on the estimated correlation matrix. 

We note that the use a one step glasso approach is computationally efficient. Further, we note that rank correlations like Kendall's tau and Spearman's rho can be better approximated by a carefully chosen parametric bivariate copula model that takes in to account the underlying bivariate dependence structure. The vast literature on copulas deals with bivariate copula models and has demonstrated their potential to capture various types of dependence structure. 

It is known, for example, that the population version of Kendall's tau is related to parametric copula models parametrized by  $\gamma_{ij}$ via 
\begin{eqnarray*}
   \tau_{ij} = 4 \int^1_0 \int^1_0 C(u,v \mid \theta_{ij})dC(u,v \mid \gamma_{ij}) - 1 . 
\end{eqnarray*}
For commonly used copula models, there is closed form representation of the Kendall's tau using the bivariate copula parameter, see for example \citet{nelsen2006introduction}. An estimate of Kendall's tau is obtained using   
\begin{equation*}
\hat{\tau}_{ij}=%
\begin{cases}
 4 \int^1_0 \int^1_0 C(u,v \mid \hat{\theta}_{ij})dC(u,v \mid \hat{\gamma}_{ij}) - 1  & \text{for}\quad i \neq j \\ 
1 & \text{for}\quad i = j %
\end{cases}%
\end{equation*}
or its closed form representation. Further Kendall's tau is related to the correlation coefficient, $\Gamma$,  by 
\begin{equation*}
\hat{\Gamma}_{ij}=%
\begin{cases}
\sin \left(\frac{\pi}{2} \hat{\tau}_{ij}\right) & \text{for}\quad i \neq j \\ 
1 & \text{for}\quad i = j %
\end{cases}%
\end{equation*}

Then to obtain sparse estimates, glasso can be implemented that uses the estimated correlation matrix $\hat{\Gamma}$ in the direct optimization of the objective function:
\begin{eqnarray*}
   \hat{\Theta}_\lambda = \textrm{argmax}_{\Theta} \left\{ \log \det(\Theta) - \mbox{Tr}( \Theta \hat{\Gamma}) - \lambda \left\|\Theta\right\|_1 \right\}. 
\end{eqnarray*}  

\subsection{Model selection}
\label{sec:modelselection}
\setcounter{equation}{0}
 \setcounter{figure}{0}
 \setcounter{table}{0}
For high dimensional data, the empirical covariance matrix is singular and poses computational problems. However, our $l_1$ penalized approach guarantees with probability one a positive definite precision matrix with the additional property of being sparse. Note that sparseness refers to the property that all parameters that are zero are actually estimated as zero with probability tending to one. This helps to assess conditional independence based on entries of the precision matrix \citep{dempster1972covariance}. 

Under the $l_1$ penalized maximum likelihood setting the sparsity of the estimated precision matrix is controlled by the penalty parameter $\lambda$ in  \eqref{maxl1pen}. We follow information based criteria in order to obtain reasonably sparse precision matrix. One could also use cross-validation for the choice of $\lambda$ which we have not consider in this article.

We now consider \eqref{obsloglik} that suggests the log likelihood of the observed data can be computed at EM convergence, see for example \citet{ibrahim2008model}. Let the estimate $\widehat{\Theta}_{\lambda}$ is obtained at EM convergence for a given value of $\lambda$. The log likelihood of the observed data is
  \[ \log L_{Y}(\widehat{\Theta}_{\lambda}) = Q(\widehat{\Theta}_{\lambda} \mid \widehat{\Theta}_{\lambda}) -  H(\widehat{\Theta}_{\lambda} \mid \widehat{\Theta}_{\lambda}).\]
Thus a model selection criterion is defined by
 \begin{eqnarray*}
   IC(\lambda) &=& -2 \log L_{Y}(\widehat{\Theta}_{\lambda}) + pen(\widehat{\Theta}_{\lambda})  \\
      &=& -2 Q(\widehat{\Theta}_{\lambda} \mid \widehat{\Theta}_{\lambda}) +  2 H(\widehat{\Theta}_{\lambda} \mid \widehat{\Theta}_{\lambda}) + pen(\widehat{\Theta}_{\lambda}) ,
   \end{eqnarray*}
where $pen(\widehat{\Theta}_{\lambda})$ refers to a penalty term. Different forms of $pen(\widehat{\Theta}_{\lambda})$ lead to different model selection criteria. Let $ d  $ denotes the number of non-zero upper or lower off-diagonal elements of $\widehat{\Theta}_{\lambda}$. Thus we define AIC and BIC as follows:
\begin{eqnarray*}
   AIC(\lambda) &=& -2 Q(\widehat{\Theta}_{\lambda} \mid \widehat{\Theta}_{\lambda})  +  2 H(\widehat{\Theta}_{\lambda} \mid \widehat{\Theta}_{\lambda}) + 2 d , \nonumber \\
   BIC(\lambda)   &=&  -2 Q(\widehat{\Theta}_{\lambda} \mid \widehat{\Theta}_{\lambda})  +  2 H(\widehat{\Theta}_{\lambda} \mid \widehat{\Theta}_{\lambda}) + \log(n) d.
   \end{eqnarray*}
Then we choose the optimal value of the penalty parameter as the one that minimizes these criteria on a grid of candidate values for $\lambda$.

\noindent
\section{Analysis of data}
\label{sec:analysisofdata}
\setcounter{equation}{0}
 \setcounter{figure}{0}
 \setcounter{table}{0}

\subsection{Simulations}
\setcounter{equation}{0}
 \setcounter{figure}{0}
 \setcounter{table}{0}

We carried out simulation studies with a variety of data structures to compare how well competing methods recover the true graph structure. Though our EM approach is computationally expensive, we noticed that in our simulations the EM algorithm converges very quickly with a maximum of ten iterations and 100 MCMC samples for hundreds of variables.   

For the purpose of comparison we considered the following approaches:
\begin{itemize}
 \item[1.]  Proposed copula EM glasso (CopulaEM).
 \item[2.]  Proposed copula skeptic glasso (CopulaTau)
 \item[3.]  Nonparanormal normal-score based estimation with truncation presented in \citet{liu2012high} (NPNscore)
 \item[4.]  Nonparanormal skeptic using Kendall's tau presented in \citet{liu2012high} (NPNtau)
\end{itemize}

In our simulation we consider sample sizes (n=200) and number of variables(p=100) which are of mixed types that include binary(10), ordinal(10), count (10), nonnormal (eg. Chisquare (10)), and the remaining 60 are normal variables with outliers (none , 1\% , 20\%). In case of outliers, observations are replaced by a value either 5 or -5 with probability 0.6, see also \cite{liu2012high}. ROC curves are used to compare performance of the different approaches in recovering the true graph.   

Figures \ref{fig:roc1}  and \ref{fig:roc2} displays ROC curves based on averages of true positive rates and false positive rates computed from 100 times repeated simulations at each of 10 grid points of the tuning parameter. For mixed data with no and low level outliers, we see that the difference in the performance of recovering the true graph based on the various methods is negligible though our copula EM glasso shows slightly better performance in case of no outliers. 

\begin{figure}[tbh]
\centering
\begin{subfigure}
\centering
\includegraphics[width=5cm]{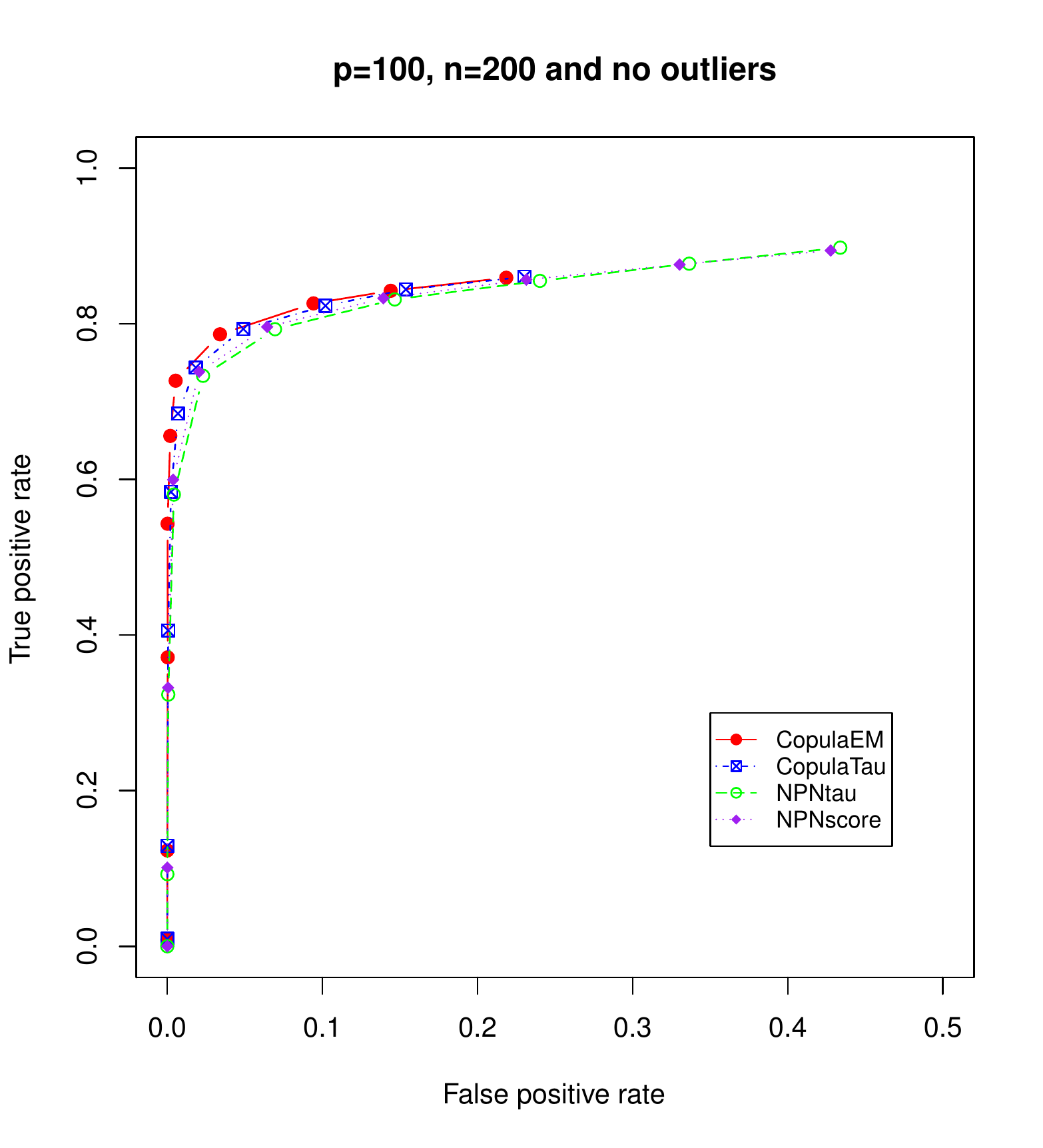} \\
\includegraphics[width=5cm]{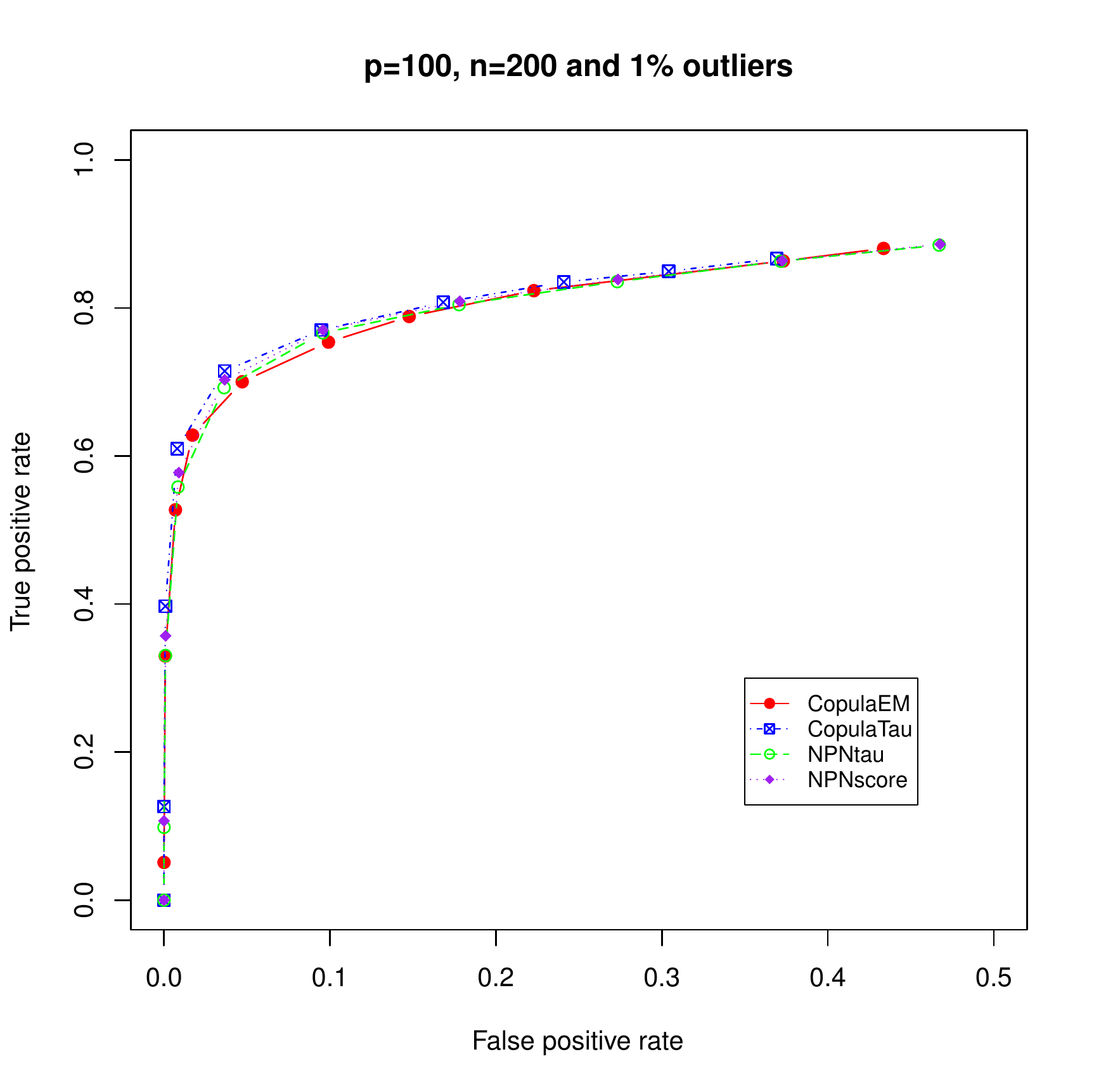} 
\end{subfigure}%
\caption{\footnotesize{\it{ROC curves comparing various methods in recovering true graph structure for $n = 200$ and $p = 100$   in case of no and low level of outiers: our proposed approaches ``CopulaEM'' (copula EM glasso) and ``CopulaTau'' (copula skeptic) perform comparablly to that of ``NPNscore'' ( normal-score nonparanormal estimator) and ``NPNtau'' ( nonparanormal skeptic )}}}
\label{fig:roc1}
\end{figure}

In case of high level of outliers with mixed variables, the performance of the proposed copula skeptic and nonparanormal skeptic are comparable but the proposed copula skeptic out performs the nonparanormal skeptic. This suggests that a careful choice of parametric bivariate copulas results in better performance over the nonparametric approaches.

\begin{figure}[tbh]
	\centering
		\includegraphics[scale=0.5]{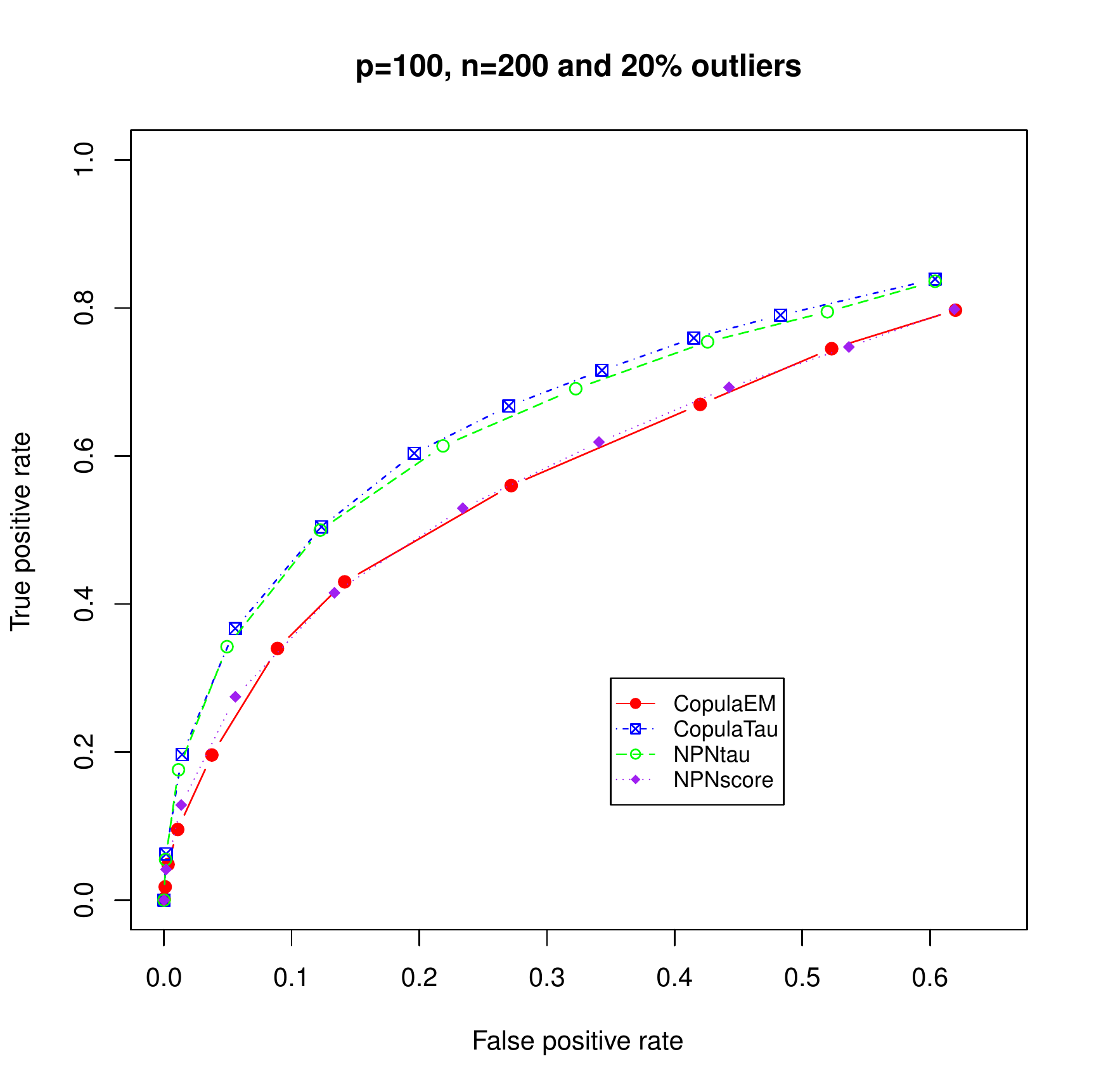} 
\caption{\footnotesize{\it{ROC curves comparing various methods in recovering true graph structure for $n = 200$ and $p = 100$  in the presence of high level of outliers: ``CopulaTau'' performs better than ``NPNtau'' while both out perferm  ``CopulaEM'' and ``NPNscore''}}}
	\label{fig:roc2}
\end{figure}

\subsection{Applications}
\setcounter{equation}{0}
\subsubsection{Breast cancer data}
In this section, we return to the motivation of our methodological development and apply the proposed Copula EM glasso  approach to the breast cancer data, which we introduced in Section \ref{sec:introduction}. The breast cancer experiment is a clinical study of DNA amplification and deletion patterns, using microarray technology. Its aim is to study the relationship between DNA amplification and deletion patterns (rather than gene expression) and the severity of the breast cancer, as measured by several clinical indicators on the patients. The data from the breast cancer experiment include 287 selected genes and 9 clinical variables obtained from 106 breast cancer patients. A brief description of clinical and genomic variables included in this study are presented in Table  \ref{tab:gencli}. 

\begin{table}[tbh]
\caption{\footnotesize List of genomic and clinical variables obtained from the breast cancer experiment. The aim is to find the underlying conditional dependence structure between these binary, count, ordered categorical, and continuous clinical and genomic variables.}
\begin{center} {\footnotesize}
 \begin{tabular}{l l l}
  \hline
  age.at.diagnosis 	&  age at diagnostics (in years) 	& continuous \\
  size   &  size of breast tumour (in mm) 	& continuous \\
  survival status		&  died due to breast cancer 	& binary \\
  grade     &  grade of breast cancer: 1 (low) to 3 (high)	& ordered categorical \\
  nodal stage   & lymp nodes involved & count \\
  NPI    &  NPI score 	& continuous \\
  ER & ER status: positive or negative & binary\\
  hist & Histology: Ductal or others & binary\\
  Ther & Therapy: Hormone or others & binary \\
  genes &   gene amplification/deletion &  continuous \\[1ex]
 \hline
\end{tabular}
\end{center}
\label{tab:gencli}
\end{table}
In the breast cancer study,  missing data rates among each of the gene amplification variables was less than 3\%. This could be that in microarray experiments it happens frequently that some part of the array could be damaged and resulted in some data to be excluded from consideration. Similarly, the missing data rates for the clinical variables were between 5\% and 20\%, respectively. 

In this study, we express the relationship between breast cancer survival, genomic and clinical variables as a series of conditional dependencies.  We applied the proposed Copula EM glasso described in this paper that internally samples missing observations. The BIC criterion resulted in an optimal penalty value of $\lambda=0.15$. A subnetwork of the complex dependence pattern among the observed variables induced by the underlying multivariate normal latent variables is displayed in Figure \ref{fig:breastred}. This subnetwork includes only links among the clinical and genomic variables. 

As can be seen from Figure \ref{fig:breastred}, breast cancer death is related to clinical variables (NPI score, Grade and size of breast cancer tumors) and  markers like SHGC4-207 and 10QTEL24.  As expected the NPI is directly related to breast cancer tumor size, cancer grade and nodes involved. The higher the values on the clinical variables the more aggressive the breast cancer and higher chance of death due to breast cancer.  

Further we see that the NPI score and the three clinical variables are related to the amplification or deletion of genes, for example,  BRCA1, RPS6KB1, ABL1, BMI1, CREBBP, STK6 (STK15), VHL, CTSB, PDGRL, GARP, ATM and PIM1. These findings are consistent with the literature that revealed these genes are associated with the risk and progression of breast cancer, see for example,  \citet{barlund2000detecting}, \citet{welcsh2001brca1}, \citet{dai2004synergistic}, \citet{srinivasan2006activation},  \citet{zia2007expression}, \citet{saeki2009diagnostic}, and \citet{rafn2012erbb2} among many others.

\begin{figure}[tbh]
	\centering
		\includegraphics[scale=0.5]{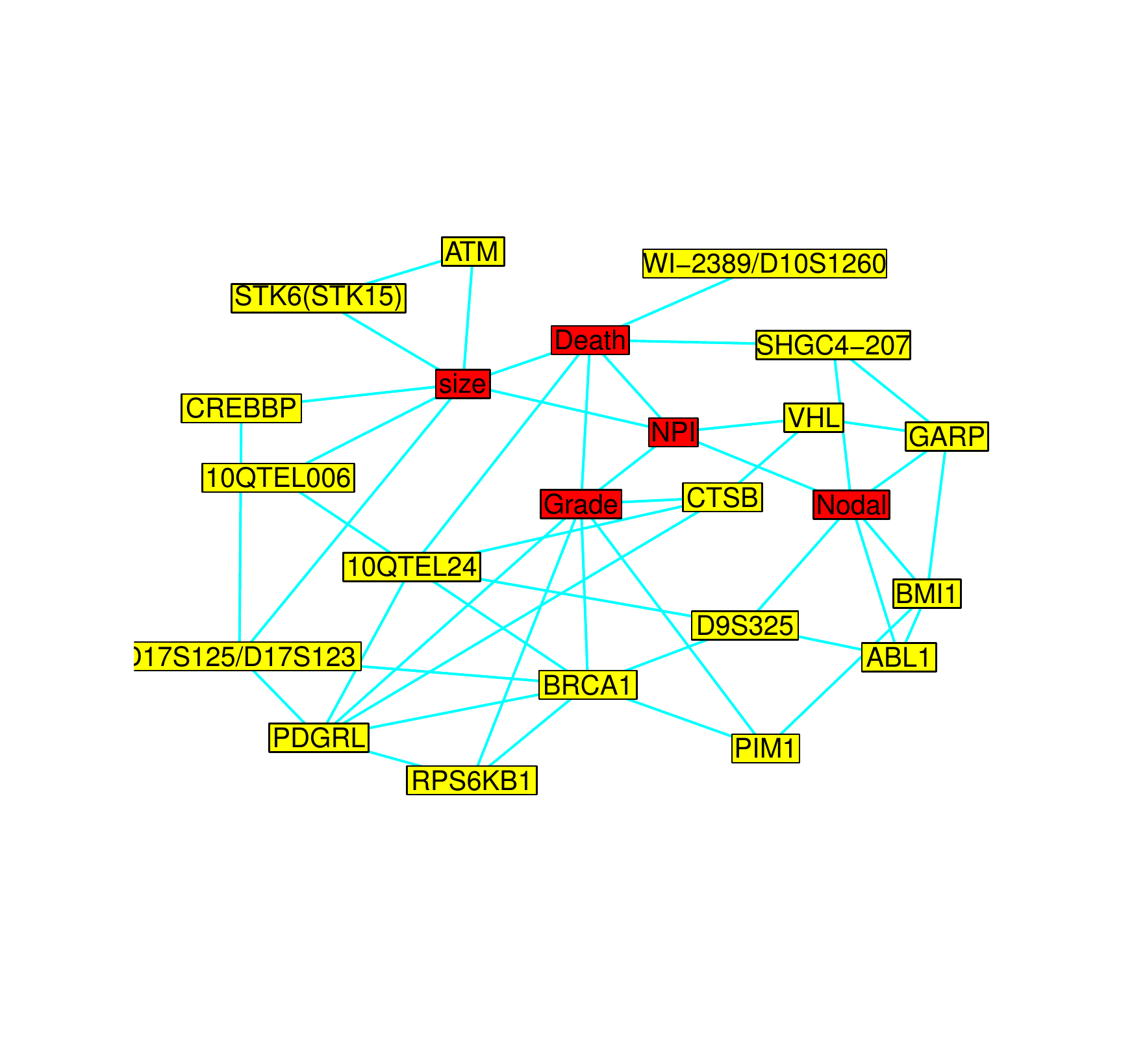} 
		\caption{\footnotesize{\it{ Conditional dependence subgraph of clinical variables and selected genes from the breast cancer data. Red color  or shaded rectangles represent clinical variables and yellow color or light shaded rectangles represent genomic variables.}}}
	\label{fig:breastred}
\end{figure}

\subsubsection{Maize genetic data}

 In this section, we consider data on maize genetic properties. The data from maize genetic nested association mapping population discussed by \citet{mcmullen2009genetic} is publicly available and downloaded from {http://www.panzea.org/lit/data\_sets.html}. The data contains 4699 samples or recombinant inbred lines combined across 25 families, representing 1106 SNP loci or genetic markers. For simplicity, to infer the genetic markers graph we treat the 4699 samples as replicates. The phenotypic variation measurements considered all reported by ordinal scale. Our objective is to identify trans-acting interactions of genetic markers across chromosomes in maize genome. The maize genome has 10 chromosomes. Trans-acting interactions also refered to as long-range chromosomal interactions or inter-chromosomal interactions has been studied, for example, in \citet{miele2008long} \citet{lum2011nonclassical}.

We applied copula skeptic glasso to the maize genetics data. Using minimum BIC criterion we obtained the value of the tuning parameter close to 0.05 taken in the range 0.03 (dense) to 0.20 (chromosome-wise separated) The resulting network is displayed in Figure \ref{fig:maize}. As expected we see from Figure \ref{fig:maize} that genetic markers within a chromosome are highly associated. On the other hand we see that  some genetic markers form links across chromosomes. These potential links, for example, are between PZA01601.1 (chromosome 8) and PZA02480.1 (chromosome 5); PZA00473.5 (chromosome 6) and PZA03624.1(chromosome 7); PZA02191.1 (chromosome 1) and PZA03321.4 (chromosome 2). These could refer trans-acting genetic markers which generate several interesting hypothesis for further experimental verifications. In support of our finding, \citet{mcmullen2009genetic} has also reported that among millions of pair-wise tests based on linkage disequilibrium (LD) marginally significant LD was observed between chromosome 6 and 7, though they concluded that it is a trivially small effect. We note that this could be possible in particular when a very large number of genetic markers are compared pair-wise, detecting even a single signiﬁcant pair-wise association is often hard because of the large multiple testing adjustment factor involved (see also \citet{buhlmann2013high}). The graphical modeling approach presented in this paper can be an efficient tool towards the study of interactions of genetic markers within (cis-acting) and across (trans-acting) chromosomes.

 \begin{figure}[tbh]
	\centering
		\includegraphics[scale=0.5]{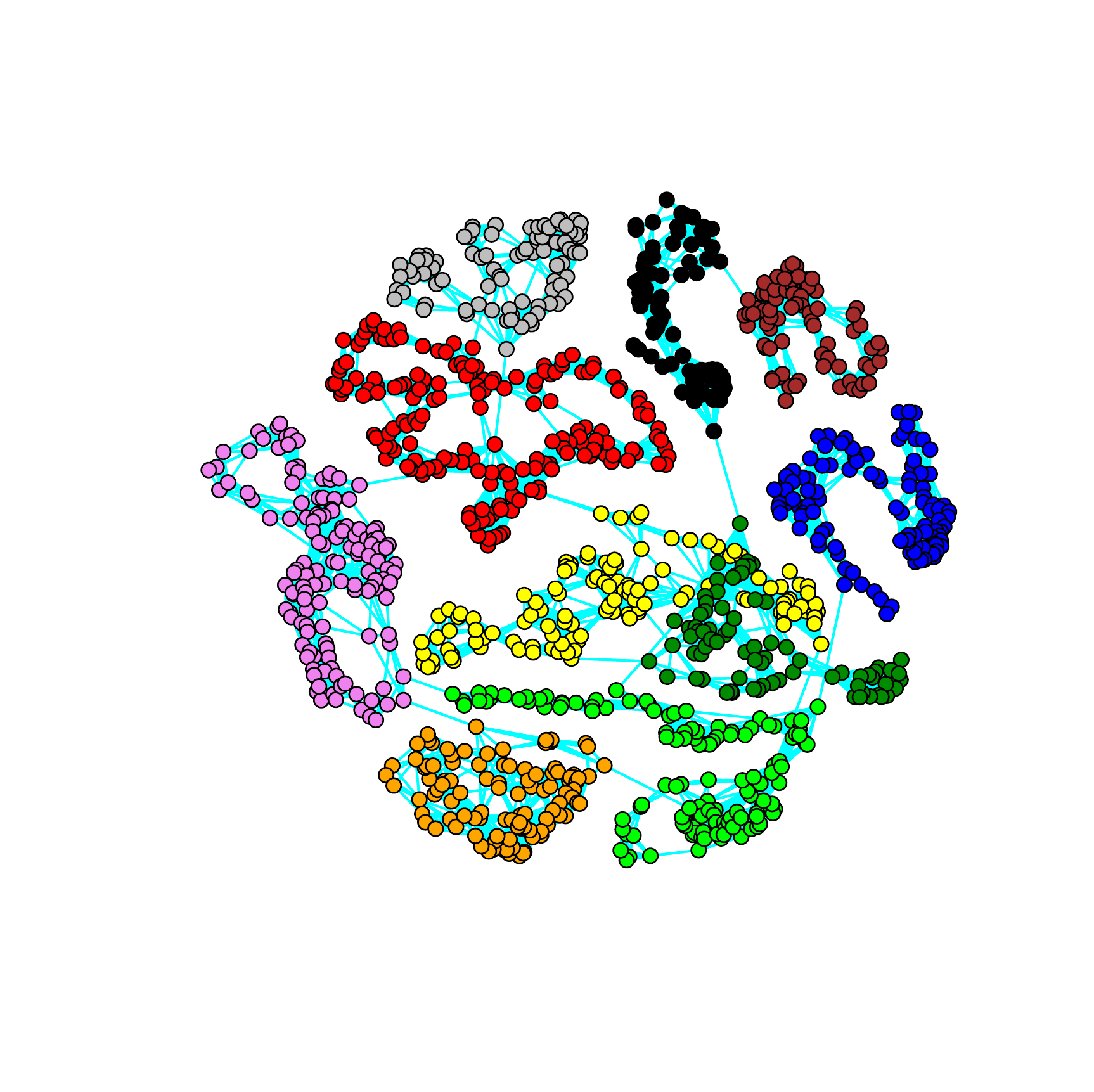} 
		\caption{\footnotesize {\it{Conditional dependence of genetic markers for the maize nested association mapping population. Genetic markers in each chromosome are represented by colours: red(Chr1), yellow(Chr2), green(Chr3), blue(Chr4), violet(Chr5), dark-green(Chr6), black(Chr7), orange(Chr8), grey(Chr9), and brown(Chr10)}}}
			\label{fig:maize}
\end{figure}

\section{Concluding Remarks}
\label{sec:conclusion}
Large high-dimensional datasets have become a common feature of many modern measurement techniques. In this article, we have presented a sparse copula Gaussian graphical model to infer networks from large high-dimensional data sets of arbitrary type. We proposed two approaches for the analysis of high dimensional mixed variables: $l_1$ penalized extended rank likelihood Gaussian copula based EM algorithm and copula skeptic glasso with pairwise parametric copula selection, not necessarily from the same family.

The performance of the proposed approaches in comparison to existing methods are evaluated using simulation studies. The simulation results suggest that the proposed copula EM glasso and copula skeptic glasso perform well to identifying  the true graph structure for high dimensional mixed variables setting. Taking into account computational efficiency we suggest to use the copula skeptic glasso for inferring networks  for very high dimensional (thousands of variables) and copula EM glasso for moderately high dimensional mixed variables setting. Moreover, the EM copula glasso approach has the advantage that it can be directly implemented for missing data without any additional computational issue.

We have illustrated the application of the proposed graphical modeling approaches on gene amplifications and deletions microarray data from breast cancer experiment and genetic markers from the maize nested association mapping population. We obtained a sparse representation of the conditional dependencies between the clinical and genetic variables, which generated several interesting hypotheses on the importance of these variables for the treatment of breast cancer. In particular, we identified many genes that are amplified or deleted in breast cancer and may functionally contribute to aggressiveness of breast cancer which is associated with worst outcome for the survival of breast cancer patients. The identification of such types of genes might lead to more accurate diagnostics and treatment at individual patient level. Similarly, a sparse representation of the interaction between genetic markers in maize genome, in particular across chromosomes will potentially be helpful for better understanding the molecular basis of phenotypic variation and to improve agricultural efficiency and sustainability. In general, the simulation and data analysis results suggest that the proposed copula based graphical modelings are promising approaches to infer networks for high dimensional nonnormal and mixed variables.

%

\renewcommand{\bibname}{Bibliography}
\bibliographystyle{plainnat}
\bibliography{fentawgcem}

\end{document}